\let\color@begingroup\relax
   \let\color@endgroup\relax}{}%
\def\fix@ieeecolor@hbox#1{%
  \hbox{\color@begingroup#1\color@endgroup}}
\patchcmd\@makecaption{\hbox}{\fix@ieeecolor@hbox}{}{\FAILED}
\patchcmd\@makecaption{\hbox}{\fix@ieeecolor@hbox}{}{\FAILED}
\newcommand{\bd}{\mathbf}
\newcommand{\C}{\mathcal{C}}
\DeclareMathOperator{\Tr}{Tr}
\DeclareMathOperator{\conv}{Conv}
\newcommand{\bma}{\begin{bmatrix}}
\newcommand{\ebma}{\end{bmatrix}}
\newcommand{\Idq}{\bd{I}_{\mathrm{dq}}}
\newcommand{\Vdq}{\bd{V}_{\mathrm{dq}}}
\newcommand{\oVdq}{\overline{\mathbf{V}}_\mathrm{dq}}
\newcommand{\oVm}{\overline{V}_\mathrm{dq}^2}
\newcommand{\oIdq}{\overline{\mathbf{I}}_\mathrm{dq}}
\newcommand{\edit}[1]{{#1}} 
\newtheoremstyle{bfnote}%
{}{}%
{\itshape}{}%
{\bfseries}{.}%
{ }%
{\thmname{#1}\thmnumber{ #2}\thmnote{ (#3)}}
\theoremstyle{bfnote}
\newtheorem{thm}{Theorem}
\newtheorem{lemma}{Lemma}
\title{\LARGE \bf Geometry of the Feasible Output Regions of Grid-Interfacing Inverters with Current Limits}
\author{Lauren Streitmatter, Trager Joswig-Jones and Baosen Zhang
\thanks{Department of Electrical and Computer Engineering, University of Washington Seattle, WA 98195, USA \{lstreit, joswitra, zhangbao\}@uw.edu} %
\thanks{The authors are partially supported by the NSF grant ECCS-1942326, the Washington Clean Energy Institute, the Grainger Foundation and the Galloway Foundation.}}
\begin{document}
\maketitle
\thispagestyle{empty}
\pagestyle{empty}

\begin{abstract}
Many resources in the grid connect to power grids via programmable grid-interfacing inverters that can provide grid services and offer greater control flexibility and faster response times compared to synchronous generators. However, the current through the inverter needs to be limited to protect the semiconductor components. Existing controllers are designed using somewhat ad hoc methods, for example, by adding current limiters to preexisting control loops, which can lead to stability issues or overly conservative operations. 

In this paper, we study the geometry of the feasible output region of a current-limited inverter. We show that under a commonly used model, the feasible region is convex. We provide an explicit characterization of this region, which allows us to efficiently find the optimal operating points of the inverter. We demonstrate how knowing the feasible set and its convexity allows us to improve upon existing grid-forming inverters such that their steady-state currents always remain within the current magnitude limit, whereas standard grid-forming controllers can lead to instabilities and violations.

\end{abstract}

\begin{IEEEkeywords}
Power systems, Optimization, Inverter Control
\end{IEEEkeywords}

\section{Introduction}
Many resources in the electric system, including solar PV, wind, storage, and electric vehicles (EVs), are connected to the grid through power electronic inverters. At the same time, power systems were designed assuming the presence of large spinning machines~\cite{kundur1994power}. As fossil fuel-based generation retires and inverter-based resources (IBRs) grow, understanding whether the latter could successfully replace the former in grid operations is becoming increasingly important. 


A key difference between IBRs and synchronous generators is their ability to handle current during normal and contingent operations~\cite{fan2022review,baeckeland2024overcurrent}. Synchronous generators can typically supply more than 5 to 10 times their rated current with relatively little damage~\cite{elnaggar2013comparison}. In contrast, in order to protect {semiconductor devices}, inverter currents can only marginally exceed their nominal rated currents~\cite{hooshyar2017microgrid,ieee2022ieee}. In recent years, significant attention has been paid to how currents should be limited, for example, see~\cite{paquette2014virtual,zhong2016current,fan2022review,baeckeland2024overcurrent} and the references within. 

However, despite these results, the dispatch signals coming from the system are often somewhat oblivious to the current limits. In most cases, inverters are expected to track some setpoints, for example, current/power for grid following inverters and voltage/power for grid-forming inverters~\cite{pattabiraman2018comparison,christensen2020high,li2022revisiting}. These setpoints are optimized to maximize the efficiency of the overall system, but they may not include the physical constraint of the inverters. Even when the inverter constraints are included, they are often presented in simple forms (e.g. upper and lower bounds on all quantities) which do not match the geometry of the actual feasible operation regions. 

Inverters achieve given setpoints using control loops (typically a current loop and a voltage loop). But if reaching these setpoints leads to a violation of the current rating, the inverter output current would saturate~\cite{baeckeland2024overcurrent}, resulting in a mismatch between the higher-level commands and the physical limit of the device. Depending on how the current limiter is designed, the inverter could become dynamically unstable and could cause severe problems in a grid~\cite{hart2014energy,qoria2020current,joswig2024optimal,joswig2024safe}.  

In this paper, we study the geometry of the feasible operating region of a current-limited inverter. That is, given a current limit and the inverter parameters, what is the region of all achievable outputs? Answering this question is the natural first step towards ensuring the safe operations of inverters and designing controllers that optimize their performances. We show that, under a commonly used model, the feasible region is convex when two of active power, reactive power and square of the voltage magnitudes~\footnote{It is more common to consider the magnitude of voltage and not its square. However, the squared magnitude is much easier to work with.} are considered. We provide an explicit characterization of this region using linear matrix inequalities, so we can efficiently find the optimal operating points of the inverter. 

Because power and voltage magnitude depend quadratically on the current, and the current is a vector in $\mathbb{R}^2$ (in the $\mathrm{dq}$ frame), the feasible region is a quadratic map of a disk (current magnitude is upper bounded). Unlike affine transformations, a quadratic map of a convex set is not, in general, convex. It turns out that the particular structure of the inverter circuit preserves convexity, which we will explain in detail in this paper. \edit{We demonstrate how knowing the feasible set and its convexity allows us to improve upon existing grid-forming inverters such that their steady-state currents always remain within the current magnitude limit, whereas standard grid-forming controllers can lead to instabilities and violations}.

\section{Model and Problem Formulation} \label{sec:model}
\subsection{Inverter Model}
\edit{In this study, we consider a widely used model of a three-phase inverter connected to a fixed grid voltage represented by an infinite bus through an \emph{RL} branch as shown in Figure~\ref{fig:circuit}~\cite{yazdani2010voltage,joswig2024optimal}.} 
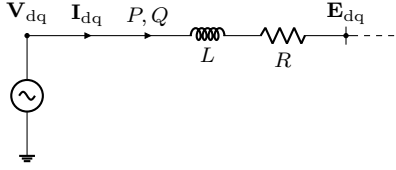
\begin{figure}[ht]
    \begin{center}
\begin{circuitikz} [american voltages,scale=0.8, font=\footnotesize]
\ctikzset{bipoles/length=7mm} 
\draw
    (0,0) to node[tlground]{} (0,0)
    to[sV] (0,2)
    (0,2) node[circ,label=above:$\mathbf{V}_{\mathrm{dq}}$]{}
    to (0,2)--(0.5,2) 
    
    to [short, i=$\mathbf{I}_{\mathrm{dq}}$] (1.5,2)
    
    to [short, i=${P,Q}$] (2.5,2)
    to[L, l_=$L$] (3.5,2)
    
    to[R, l_=$R$] (5,2)

    
    to (5,2) -- (5.5,2) 
    (5.3,1.85) -- (5.3,2.15)
    (5.3,2) node[circ,label=above:$\mathbf{E}_{\mathrm{dq}}$]{}
    (5.6,2) -- (5.7,2)
    (5.8,2) -- (5.9,2)
    (6.0,2) -- (6.1,2);
\end{circuitikz}
\end{center}
    \caption{We model the inverter as a controllable voltage source in the $\mathrm{dq}$-frame connected via an $RL$ filter to an infinite bus.}
    \label{fig:circuit}
\end{figure}
The model assumes a balanced three-phase system and can thus use the direct-quadrature ($\mathrm{dq}$) frame transformation with reference to some angle $\theta$ rotating at constant frequency $\omega$ to describe all the rotating physical quantities in $\mathbb{R}^2$ rather than $\mathbb{R}^3$. We assume the grid-side voltage is of constant magnitude, $E$, and define $\mathbf{E}_\mathrm{dq}$ with respect to the grid-side voltage angle, $\theta$, to get $\mathbf{E_\mathrm{dq}}=(E,0)$. The dynamics of a voltage-source inverter connected to $\mathbf{E_\mathrm{dq}}$ through resistor $R$ and inductor $L$ are governed by 
\begin{equation}
\label{eq:rl_dynamics}
\begin{bmatrix} 
\dot{I}_\mathrm{d} \\\ \dot{I}_\mathrm{q} 
\end{bmatrix} = 
\mathbf{A} \begin{bmatrix} 
I_\mathrm{d} \\\ I_\mathrm{q} 
\end{bmatrix} + \frac{1}{L}
\begin{bmatrix}
    V_{\mathrm{d}}-E\\ V_{\mathrm{q}}
\end{bmatrix},
\end{equation}
where $\mathbf{A} = \begin{bmatrix}
-R/L & \omega \\
-\omega & -R/L
\end{bmatrix}$, and $\mathbf{I_\mathrm{dq}}$ and $\mathbf{V_\mathrm{dq}}$ are the inverter current and voltage, respectively, in the $\mathrm{dq}$-reference frame of the grid \cite{yazdani2010voltage}. \edit{This choice is mainly for notational convenience. We show in Section \ref{sec:module} that our overall approach and main results do not require the knowledge of the grid voltage angle and in fact we can take any angle to be the reference.} We will treat $\Idq$ as the state and $\Vdq$ as the control input. The matrix $\bd A$ being skew-symmetric with equal diagonal elements the the key structural property of the inverter that allows us to derive the main results. 

\edit{To protect the inverter from damaging current levels, the magnitude of the current should be limited~\cite{luo2017advanced,bottrell2013comparison}. Inverter current limits arise from two aspects, thermal limits and inductor saturation~\cite{kassakian2023principles}. Thermal limits are ``soft constraints,'' in the sense that they can be violated for short periods of time. Inductor saturation is a hard constraint and has values that are typically a few times larger than the thermal constraints~\cite{sorensen2013thermal}.} We denote this limit as $I_\mathrm{max}$, and we define the safe (or feasible) current operating region as  
$$\mathcal{I}:=\{ \Idq \mid \|\mathbf{I}_\mathrm{dq}\|_2^2 \leq I_\mathrm{max}^2\}.$$

\subsection{Inverter Output}
Inverters in the grid are usually asked to optimize their output, typically a combination of active power, reactive power, and voltage magnitude.  The natural question becomes how well we can optimize these outputs given the constraint that current needs to stay in the safe region $\mathcal{I}$. 

The inverter's active power, reactive power, and the squared voltage magnitude are all expressed as quadratic functions of the current and the voltage: 
\begin{subequations} \label{eq:pqv}
\begin{align}
    \label{eq:P_inverter}
    P &= \frac{3}{2} \mathbf{I}_\mathrm{dq}^T \mathbf{V}_\mathrm{dq} \\
    \label{eq:Q_inverter}
    Q &= \frac{3}{2} \mathbf{I}_\mathrm{dq}^T \bd J
    \mathbf{V}_\mathrm{dq} \\
    \label{eq:V2_inverter}
    V_\mathrm{dq}^2 &= \mathbf{V}_\mathrm{dq}^T \mathbf{V}_\mathrm{dq},
\end{align}
\end{subequations}
where $\bd J= \bma 0 & 1 \\ -1 & 0 \ebma$. We note that it is more common to consider the voltage magnitude rather than the squared magnitude as we do here. Of course, the two are equivalent from the point of view of achieving a desired setpoint, but the squared form in~\eqref{eq:V2_inverter} is much easier to work with. 

\edit{The quantities in~\eqref{eq:pqv} are defined for all time $t$ since $\Idq$ and $\Vdq$ are functions of $t$, but we are usually interested in optimizing the equilibrium values resulting from the dynamics in~\eqref{eq:rl_dynamics}.} Assuming that \eqref{eq:rl_dynamics} is asymptotically stable (we show this in Section~\ref{sec:control}), we denote the equilibrium of current and voltage as $\overline{\mathbf{I}}_\mathrm{dq}$ and $\overline{\mathbf{V}}_\mathrm{dq}$, respectively. Substituting these into \eqref{eq:pqv}, we get the equilibrium values of $\overline{P}$, $\overline{Q}$ and $\oVm$. \edit{We are interested in two related questions: 1) finding the optimal values of $\overline{P}$, $\overline{Q}$ and $\oVm$ and 2) using these values to improve the performance of existing grid-forming controllers.}

We first look at the question of optimizing the equilibrium values. Let $S_1,S_2\in (\overline{P}, \overline{Q},\overline{V}_\mathrm{dq}^2)$ denote some choice of two out of the three quantities. We are interested in solving the following problem:
\begin{equation}
    \label{eq:opt_control}
    \begin{aligned}
    \text{minimize} \quad & f(S_1,S_2) \\
    \text{subject to} \quad & \oIdq \in \mathcal{I},
    \end{aligned}
\end{equation}
where $f:\mathbb{R}^2 \rightarrow \mathbb{R}$ is some objective function. For example, suppose that we are interested in tracking some setpoints $S_1^*,S_2^*$ as closely as possible. The objective would be $f(S_1,S_2)=\frac12 (S_1-S_1^*)^2+\gamma \frac12 (S_2-S_2^*)^2$ with $\gamma$ being some tradeoff parameter. 

It is not immediately clear whether the problem~\eqref{eq:opt_control} is easy to solve. Even if $f$ is a convex function, the quadratic forms in~\eqref{eq:pqv} make the overall problem not convex in $\oIdq$.

\subsection{Feasible Operating Regions}
We can rewrite \eqref{eq:opt_control} as an optimization over $S_1$ and $S_2$:
\begin{equation}
    \label{eq:opt_S}
    \begin{aligned}
    \text{minimize} \quad & f(S_1,S_2) \\
    \text{subject to} \quad & (S_1,S_2) \in \mathcal{S}
    \end{aligned}
\end{equation}
where $\mathcal{S}$ is the set of all feasible points achieved by currents $\oIdq \in \mathcal{I}$.   In the next section, we show that $\mathcal{S}$ is convex, and consequently, \eqref{eq:opt_S} is a convex optimization problem if $f$ is convex. Using the tracking example again, \eqref{eq:opt_S} with $f(S_1,S_2)=\frac12 (S_1-S_1^*)^2+\gamma \frac12 (S_2-S_2^*)^2$ can be solved efficiently \edit{at real-time}, allowing us to achieve perfect tracking if possible, and the closest setpoints if not.

\section{Geometry of the Feasible Region} \label{sec:module}
In this section, we study the geometry of the feasible set $\mathcal{S}$. The main result is given by Theorem~\ref{thm:main}.
\begin{thm}\label{thm:main}
    Let $(S_1,S_2)$ be a pair of points formed by choosing any two of the three quantities $\overline{P},\overline{Q},\oVm$. Let $\mathcal{S} \in \mathbb{R}^2$ be the set of all achievable points $(S_1,S_2)$ by $\oIdq \in \mathcal{I}$. Then $\mathcal{S}$ is convex.  
\end{thm}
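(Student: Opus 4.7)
The plan is to reduce Theorem \ref{thm:main} to a direct convexity argument by exposing the special quadratic form of each output. First, I would solve the equilibrium of \eqref{eq:rl_dynamics} to obtain $\oVdq = -L\bd A\,\oIdq + \mathbf{E}_\mathrm{dq}$. The key structural observation is that $-L\bd A = R\,\bd{I}_2 - \omega L\,\bd J$ decomposes as the sum of a scalar multiple of the identity and a skew-symmetric matrix, so $\oIdq^\top(-L\bd A)\oIdq = R\,\|\oIdq\|^2$; i.e., the quadratic part depends on $\oIdq$ only through its magnitude. Substituting into \eqref{eq:pqv}, each of $\overline{P}$, $\overline{Q}$, and $\oVm$ takes the form $a\,\|\oIdq\|^2 + \bd b^\top \oIdq + k$ for a positive scalar $a$, vector $\bd b \in \mathbb{R}^2$, and constant $k$. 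Consequently, for any of the three pairings $(S_1, S_2)$, the output map is
\[
F(\oIdq) \;=\; \bd m\,\|\oIdq\|^2 \,+\, \bd M\,\oIdq \,+\, \bd n,
\]
where $\bd m, \bd n \in \mathbb{R}^2$ and $\bd M \in \mathbb{R}^{2\times 2}$; a short case-by-case computation of $\det\bd M$ shows $\bd M$ is invertible whenever $E, R, \omega L \neq 0$.

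To show $\mathcal{S}$ is convex, I would take $S^{(1)} = F(\oIdq^{(1)})$ and $S^{(2)} = F(\oIdq^{(2)})$ with $\oIdq^{(i)} \in \mathcal{I}$, and any $\lambda \in [0,1]$. Setting $\oIdq^{(\lambda)} := \lambda\,\oIdq^{(1)} + (1-\lambda)\,\oIdq^{(2)}$ and $t^{(\lambda)} := \lambda\,\|\oIdq^{(1)}\|^2 + (1-\lambda)\,\|\oIdq^{(2)}\|^2$, the convex combination becomes $\lambda S^{(1)} + (1-\lambda) S^{(2)} = \bd m\,t^{(\lambda)} + \bd M\,\oIdq^{(\lambda)} + \bd n$. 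I then seek a preimage of the form $\oIdq^\star = \oIdq^{(\lambda)} - \tau\,\bd M^{-1}\bd m$ satisfying $\|\oIdq^\star\|^2 = t^{(\lambda)} + \tau$. Expanding the norm and using $\bd M(\bd M^{-1}\bd m) = \bd m$, both the self-consistency of $\oIdq^\star$ and the identity $F(\oIdq^\star) = \lambda S^{(1)} + (1-\lambda)S^{(2)}$ reduce to a single scalar quadratic equation in $\tau$ with positive leading coefficient $\|\bd M^{-1}\bd m\|^2$. At $\tau = 0$ this quadratic evaluates to $\|\oIdq^{(\lambda)}\|^2 - t^{(\lambda)} \le 0$ by convexity of the squared norm, so a non-positive real root $\tau^\star \le 0$ exists, and $\|\oIdq^\star\|^2 = t^{(\lambda)} + \tau^\star \le t^{(\lambda)} \le I_\mathrm{max}^2$, placing $\oIdq^\star$ in $\mathcal{I}$.

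The main obstacle is this last step: guaranteeing that the scalar quadratic in $\tau$ has a root compatible with the disk constraint $\|\oIdq^\star\|^2 \le I_\mathrm{max}^2$. The argument hinges on two ingredients: invertibility of $\bd M$, which is where the special $RL$-filter structure (via the particular form of $\bd A$) enters; and the Jensen-type inequality $\|\oIdq^{(\lambda)}\|^2 \le t^{(\lambda)}$, which provides precisely the slack needed to make a non-positive root available and keep $\oIdq^\star$ inside the safe region. Once these two ingredients are in place, convexity of $\mathcal{S}$ follows by explicit construction rather than by a more abstract convex-analytic argument; the construction should also furnish, as a byproduct, the explicit LMI description promised in the introduction.
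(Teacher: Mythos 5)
Your proposal is correct and essentially mirrors the paper's own proof: the same equilibrium substitution $\oVdq=\mathbf{E}_\mathrm{dq}-L\bd A\oIdq$ with the scaled-identity symmetric part of $\bd A$ reduces each output to a positive multiple of $\|\oIdq\|_2^2$ plus a linear term, the invertibility of the stacked linear-coefficient matrix plays the role of the paper's linear independence of $\bd a,\bd b$, and convexity is then obtained exactly as in the paper's Lemma~\ref{thm:convexity} by solving a scalar quadratic for a preimage shifted along a fixed direction (your $\bd M^{-1}\bd m$ is the paper's $\bd c$). The only cosmetic difference is that the paper first normalizes by the affine map $\begin{bmatrix}\bd a^T\\ \bd b^T\end{bmatrix}^{-1}$ and then compares the roots of two quadratics $f_1,f_2$, whereas you stay in the original coordinates and observe that choosing the non-positive root $\tau^\star$ makes the bound $\|\oIdq^\star\|_2^2\le t^{(\lambda)}\le I_\mathrm{max}^2$ automatic---a slight streamlining of the same argument.
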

The proof of this theorem is given in the next section. Figure~\ref{fig:opregion} plots what the regions look like for the combinations of $(P,Q), (P,V_\mathrm{dq}^2)$, and $(Q,V_\mathrm{dq}^2)$. \edit{The convexity of the regions do not depend on the values of $R$ and $L$ (as long as they are positive) or $\bd E_\mathrm{dq}$. The exact shape depends on $R$, $L$ and $||\bd E_\mathrm{dq}||_2$, but not the phase angle of $\bd E_\mathrm{dq}$. Therefore, we do not need a PLL to sense the grid voltage angle. We do assume that inverters know their own $RL$ filter specifications or can estimate them, as well as $||E_\mathrm{dq}||$ is measurable~\cite{jayalath2016generalized,kjaer2005review}}.

\begin{figure}[ht]
\centering
    \includegraphics[width=0.7\linewidth]{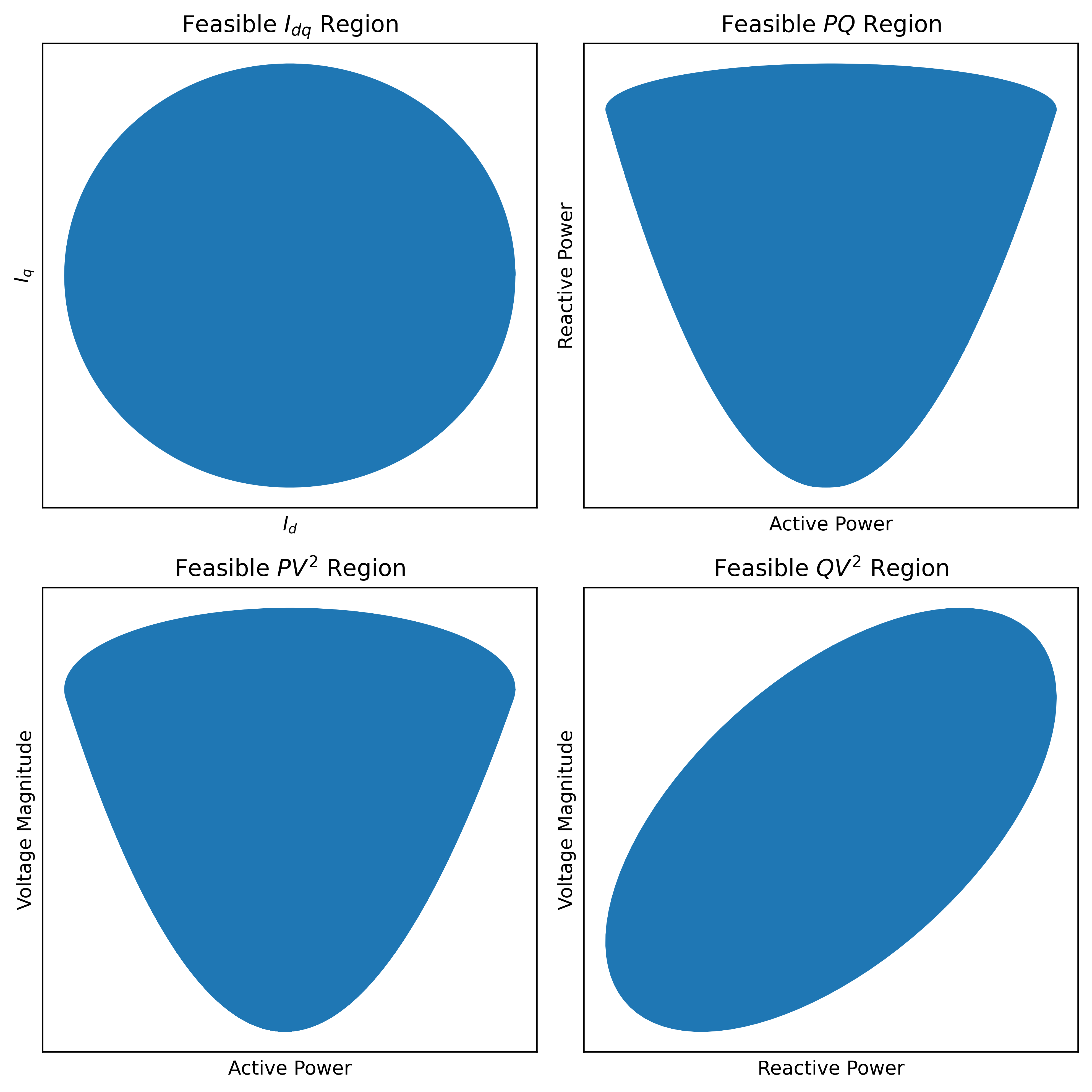}
    \caption{The set of feasible inverter currents ($\mathcal{I})$ (top left) always forms a circle of radius $I_\mathrm{max}$ due to the magnitude constraint on $\mathbf{I}_\mathrm{dq}$. \edit{The shapes of feasible $(P,Q) (\mbox{top right}), (P,V_\mathrm{dq}^2) (\mbox{bottom left})$, and $(Q,V_\mathrm{dq}^2) (\mbox{bottom right})$ regions depend on specific network parameters, namely $RL$ filter impedance values and grid-side voltage magnitude $||\bd E_\mathrm{dq}||_2$}.}
    \label{fig:opregion}
\end{figure}



\subsection{Proof of Theorem~\ref{thm:main}}
The first step in the proof is to write the quantities in \eqref{eq:pqv} just as a function of the current. At equilibrium, the steady state voltage is 
 $$\oVdq = \mathbf{E}_\mathrm{dq} - L\mathbf{A}\oIdq.$$
 Substituting this into \eqref{eq:pqv}, we have
 \begin{subequations} \label{eq:quad}
 \begin{align}
    \label{eq:quad_eq}
    \overline{P} &= \frac{3}{2} \oIdq^T (\mathbf{E}_\mathrm{dq} - L\mathbf{A}\oIdq) \\
    \label{eq:Q_eqbm}
    \overline{Q} &= \frac{3}{2}\oIdq^T \bd J (\mathbf{E}_\mathrm{dq} - L\mathbf{A}\oIdq) \\
    \label{eq:V2_eqbm}
    V_\mathrm{dq}^2 &= (\mathbf{E}_\mathrm{dq} - L\bd A\oIdq)^T (\mathbf{E}_\mathrm{dq} - L\bd A\oIdq).
    \end{align}
\end{subequations}

Using the fact that for any square matrix $\bd M$ and vector $\bd z$, $\bd z^T \bd M \bd z=\bd z^T \frac{\bd M+\bd M^T}{2} \bd z$ and $\frac{\bd A+\bd A^T}{2}=\bma -\frac{R}{L} & 0 \\ 0 & -\frac{R}{L} \ebma=-\frac{R}{L} \bd{I}_2$ (where $\bd{I}_2$ is the 2 by 2 identity matrix), \eqref{eq:quad} becomes
 \begin{subequations} \label{eq:quad_norm}
 \begin{align}
    \label{eq:P_quad_norm}
    \overline{P} &= \frac{3}{2} R ||\oIdq||_2^2+\frac{3}{2} \bd E_\mathrm{dq}^T \oIdq\\
    \label{eq:Q_eqbm_norm}
    \overline{Q} &= \frac{3}{2} \omega L ||\oIdq||_2^2 +\frac{3}{2} \bd E_\mathrm{dq}^T \bd J \oIdq \\
    \label{eq:V2_eqbm_norm}
    V_\mathrm{dq}^2 &= (R^2+\omega^2 L^2) ||\oIdq||_2^2+2 \bd E_\mathrm{dq}^T \bd A \oIdq+||\bd E_\mathrm{dq}||_2^2.
    \end{align}
\end{subequations}
\edit{Note the grid voltage, $\bd E_\mathrm{dq}$, enters into these equations either as $||\bd E_\mathrm{dq}||_2^2$ or as a product with $\oIdq$. Because we are looking at all possible $\oIdq$ satisfying a magnitude constraint, all possible angle differences between the inverter current (or voltage) and the grid voltage are included, and the shapes of the regions do not depend on the phase angle of the grid voltage.}



Because of the particular form of these equations, Theorem 1 follows directly from the following Lemma. 
\begin{lemma} \label{thm:convexity}
Given linearly independent $\mathbf{a}, \mathbf{b} \in \mathbb{R}^2$  and $\alpha, \beta \in \mathbb{R}_+$, the set \( \mathcal{C} = \left\{ \left( \alpha \mathbf{x}^T \mathbf{x} + \mathbf{a}^T \mathbf{x}, \beta \mathbf{x}^T \mathbf{x} + \mathbf{b}^T \mathbf{x} \right) \mid \mathbf{x}^T \mathbf{x} \leq 1 \right\} \) is convex. 
\end{lemma}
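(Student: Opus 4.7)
The plan is to apply an affine change of coordinates that kills the quadratic term in one output coordinate, then slice the resulting image by the cleaned-up coordinate and show that the upper and lower boundaries are respectively concave and convex functions of one variable. Convexity of a planar set follows immediately.

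First I would set $T_1 := \beta S_1 - \alpha S_2$ and $T_2 := \alpha S_1 + \beta S_2$; this linear map has determinant $\alpha^2 + \beta^2 > 0$, so it is an invertible affine change of variables. Direct substitution gives
\begin{align*}
T_1 &= (\beta \mathbf{a} - \alpha \mathbf{b})^T \mathbf{x} =: \mathbf{c}^T \mathbf{x}, \\
T_2 &= \mu\, \|\mathbf{x}\|_2^2 + \mathbf{d}^T \mathbf{x},
\end{align*}
where $\mu := \alpha^2+\beta^2 > 0$ and $\mathbf{d} := \alpha \mathbf{a} + \beta \mathbf{b}$. Linear independence of $\mathbf{a}, \mathbf{b}$ ensures $\mathbf{c} \neq \mathbf{0}$, and affine bijections preserve convexity, so it suffices to show the image $\tilde{\mathcal{C}}$ of the unit disk under $\mathbf{x} \mapsto (T_1, T_2)$ is convex.

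Next I would slice $\tilde{\mathcal{C}}$ by fixing $T_1 \in [-\|\mathbf{c}\|_2, \|\mathbf{c}\|_2]$. The constraint $\mathbf{c}^T \mathbf{x} = T_1$ restricts $\mathbf{x}$ to a line, and the disk constraint cuts out a segment $|s| \leq s_0(T_1) := \sqrt{1 - T_1^2/\|\mathbf{c}\|_2^2}$, where $s$ is the signed distance along the line (with direction $\mathbf{e} \perp \mathbf{c}$, $\|\mathbf{e}\|_2 = 1$) from the foot of the origin. Writing $\mathbf{x} = (T_1/\|\mathbf{c}\|_2^2)\mathbf{c} + s\mathbf{e}$ and substituting, $T_2$ becomes a strictly convex quadratic of the form $h(T_1) + \mu s^2 + \nu s$, where $\nu := \mathbf{d}^T \mathbf{e}$ is a constant independent of $T_1$. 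Therefore the image of the segment is an interval $[T_2^-(T_1), T_2^+(T_1)]$, and $\tilde{\mathcal{C}}$ is the region between these two boundaries over $|T_1| \leq \|\mathbf{c}\|_2$.

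Finally I would compute both boundaries. The maximum of $\mu s^2 + \nu s$ on $[-s_0, s_0]$ is attained at an endpoint, and after the $T_1^2$ contributions in $h(T_1)$ cancel those from $\mu s_0^2$, one obtains
\[
T_2^+(T_1) = \mu + (\mathbf{d}^T\mathbf{c}/\|\mathbf{c}\|_2^2)\, T_1 + |\nu|\sqrt{1 - T_1^2/\|\mathbf{c}\|_2^2},
\]
which is the sum of an affine function and a concave term, hence concave. The minimum of $\mu s^2 + \nu s$ is attained at the interior vertex $s^* = -\nu/(2\mu)$ when $|s^*| \leq s_0$, and at the near endpoint otherwise; this produces a piecewise formula for $T_2^-$: a convex quadratic in $T_1$ on the interior regime $T_1^2/\|\mathbf{c}\|_2^2 \leq 1 - \nu^2/(4\mu^2)$, and the expression $\mu + (\text{linear in }T_1) - |\nu|\sqrt{1 - T_1^2/\|\mathbf{c}\|_2^2}$ (also convex) on the outer regime. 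The main obstacle I anticipate is establishing global convexity of this piecewise $T_2^-$; it reduces to a short direct check that the two pieces agree in both value and slope at the transition $T_1^2/\|\mathbf{c}\|_2^2 = 1 - \nu^2/(4\mu^2)$, so convexity of each piece together with $C^1$ matching forces $T_2^-$ to be convex on the whole interval. Combined with concavity of $T_2^+$, this yields convexity of $\tilde{\mathcal{C}}$, and hence of $\mathcal{C}$.
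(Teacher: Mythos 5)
Your proof is correct, but it follows a genuinely different route from the paper. You change variables via $(T_1,T_2)=(\beta S_1-\alpha S_2,\ \alpha S_1+\beta S_2)$, which cancels the quadratic term in one coordinate, and then argue by slicing: for each fixed $T_1=\mathbf{c}^T\mathbf{x}$ the slice is an interval, the upper envelope $T_2^+$ is affine plus a multiple of $\sqrt{1-T_1^2/\|\mathbf{c}\|_2^2}$ (hence concave), and the lower envelope $T_2^-$ is piecewise convex (a convex quadratic where the interior minimizer $s^\ast=-\nu/(2\mu)$ is admissible, and affine minus the semicircle term otherwise), with value and slope matching at the transition so that the piecewise function is globally convex; a region sandwiched between a convex lower graph and a concave upper graph over an interval is convex. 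I checked the computations (the cancellation of the $T_1^2$ terms in $T_2^+$, the matching of value and derivative at $s_0=|\nu|/(2\mu)$) and they go through. The paper instead maps $\mathcal{C}$ by $[\mathbf{a}^T;\mathbf{b}^T]^{-1}$ to the normal form $\{\|\mathbf{x}\|_2^2\,\mathbf{c}+\mathbf{x}\mid \|\mathbf{x}\|_2\le 1\}$ and verifies the definition of convexity directly: it shows a scalar quadratic equation $f_1(\mu)=0$ for the offset along $\mathbf{c}$ always has a root inside the interval where $\|\mathbf{y}\|_2\le 1$, by comparing $f_1$ with a second upward parabola $f_2$. What your approach buys is an explicit algebraic description of the boundary curves of the feasible region (useful for visualization and for seeing exactly where the boundary is curved versus quadratic), at the cost of a case split and the $C^1$-matching check; the paper's root-comparison argument is shorter, avoids boundary formulas entirely, and its by-product (recovering $\mathbf{x}$ by solving a quadratic) is reused later to extract a feasible current from a non-rank-one SDP solution. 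One small shared caveat: your change of variables is singular if $\alpha=\beta=0$ (the paper's parabola argument also degenerates there), but that case is trivial since the set is then a linear image of a disk, and positive $R$, $L$ exclude it in the application.
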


To apply Lemma~\ref{thm:convexity} to our problem, first note that the constant term, $||\bd E_\mathrm{dq}||_2^2$, can be ignored.  
We also note that $\mathbf{a}, \mathbf{b}$ will always be linearly independent: the linear terms in \eqref{eq:P_quad_norm} and \eqref{eq:Q_eqbm_norm}, proportional to $\bd E_\mathrm{dq}$ and $\bd J^T \bd E_\mathrm{dq}$, respectively, are orthogonal. The linear term in \eqref{eq:V2_eqbm_norm} is proportional to $\bd A^T\bd E_\mathrm{dq}$ which is a linear combination of the linear terms in \eqref{eq:P_quad_norm} and \eqref{eq:Q_eqbm_norm}. Thus, each of the three vectors is linearly independent with respect to any one of the other two vectors. 

\begin{proof}
To see this, we first rewrite $\mathcal{C}$ as 
\begin{equation}\label{eqn:C}
\left\{ 
\begin{bmatrix} 
\alpha ||\mathbf{x}||_2^2  \\\ \beta ||\mathbf{x}||_2^2
\end{bmatrix} + 
\begin{bmatrix}
\mathbf{a}^T \\\ \mathbf{b}^T
\end{bmatrix} \mathbf{x} \ \mid \ ||\mathbf{x}||_2^2 \leq 1
\right\}.
\end{equation}
Then, we define a new set $\mathcal{C}'$ as the image of set $\mathcal{C}$ under an affine function given by:
\begin{equation} \label{eqn:C'}
    \mathcal{C}'= \left\{ \begin{bmatrix} 
\mathbf{a}^T \\\ \mathbf{b}^T
\end{bmatrix}^{-1} \bd s \mid \bd s \in \mathcal{C} \right\} =  \left\{ ||\mathbf{x}||_2^2 \ \mathbf{c} + \mathbf{x} \mid ||\mathbf{x}||_2^2 \leq 1 \right\},
\end{equation}
for vector $\bd s=(S_1,S_2)\in \mathcal{C}$ and arbitrary vector $\mathbf{c}\in\mathbb{R}^2$, determined by $$\bd c =\begin{bmatrix} 
\mathbf{a}^T \\\ \mathbf{b}^T
\end{bmatrix}^{-1} \begin{bmatrix} \alpha \\\ \beta\end{bmatrix}.$$ The linear independence of $\bd a, \bd b$ ensures the matrix inverse exists. Since  convexity is preserved under affine transformations \cite{Boyd_Vandenberghe_2004}, it suffices to show the convexity of $\mathcal{C}'$.

We now apply the definition of convexity to prove $\mathcal{C}'$ is convex. We will show for any $\lambda\in[0,1]$ and any $\mathbf{x_1}, \mathbf{x_2}$ with $||\mathbf{x_1}||_2^2 \leq 1, || \mathbf{x_2}||_2^2 \leq 1$, $\exists \bd y \in \mathcal{C}'$ such that 
\begin{multline}
\label{eq:condition1}
    \ \| \mathbf{y}\|_2^2 \ \mathbf{c} + \mathbf{y} = (\lambda \| \mathbf{x_1}\|_2^2 + (1-\lambda) \| \mathbf{x_2}\|_2^2) \ \mathbf{c} + \lambda \mathbf{x_1} + (1-\lambda)\mathbf{x_2}, \\ \mathrm{and} \
    \| \mathbf{y}\|_2^2 \leq 1.
\end{multline}
Note that because $\lambda=0$ and $\lambda =1$ result in the trivial cases of $\bd y=\bd x_1$ and $\bd y=\bd x_2$, we assume that $0<\lambda <1$. We recognize from \eqref{eq:condition1} that $\mathbf{y}$ must be of the form:
\begin{gather}
\mathbf{y} = \mu \mathbf{c} + \bd z \mathrm{, where}
\label{eq:y} \\ \mu=\lambda \| \mathbf{x_1}\|_2^2 + (1-\lambda) \| \mathbf{x_2}\|_2^2 - \| \mathbf{y}\|_2^2.\label{eq:mu}
\end{gather}
and $\mathbf{z} = \lambda \mathbf{x_1} + (1-\lambda)\mathbf{x_2}$

The existence of $\bd y$ depends on finding $\mu$ that satisfies \eqref{eq:mu} and results in $||\bd y||_2^2\leq1$. The following steps prove mathematically what is shown visually in Figure \ref{fig:quadratics}: at least one solution for $\mu$ to \eqref{eq:mu} will always lie within the bounds on $\mu$ determined by $||\bd y||_2^2\leq1$.

First, we define $f_1(\mu)$ by substituting \eqref{eq:y} into \eqref{eq:mu} and $f_2(\mu)$ by substituting \eqref{eq:y} into $||\bd y||_2^2\leq1$: 
\begin{equation}
    \label{eq:quadratic_mu}
    f_1(\mu) =\| \mathbf{c}\|_2^2 \mu^2 + (2\mathbf{c}^T\mathbf{z}+1)\mu+(\| \mathbf{z}\|_2^2 -\zeta) = 0,
\end{equation}
\begin{equation}
    \label{eq:magnitude_mu}
    f_2(\mu) =\| \mathbf{c}\|_2^2 \mu^2 + (2\mathbf{c}^T\mathbf{z})\mu+(\| \mathbf{z}\|_2^2 -1) \leq 0,
\end{equation}
where $\zeta = \lambda \| \mathbf{x_1}\|_2^2 + (1-\lambda) \| \mathbf{x_2}\|_2^2\leq1$, and $|| \mathbf{z}||_2^2\leq\zeta$ by the triangle inequality for norms. Note that both parabolas open upwards ($||\bd c||^2_2\geq 0$), and the constant terms of \eqref{eq:quadratic_mu} and \eqref{eq:magnitude_mu} are both negative so each equation will have one positive and one negative root. Denoting the roots of $f_1(\mu)$ as $\mu^-$ and $\mu^+$ and the roots of $f_2(\mu)$ as $\underline{\mu}$ and $\overline{\mu}$, we need at least one of $\mu^-$ or $\mu^+$ to fall within the interval $[\underline{\mu},\overline{\mu}]$ as shown in Figure \ref{fig:quadratics}.

We show $\mu^+\in[\underline{\mu},\overline{\mu}]$ always holds by comparing the zeros and derivatives of \eqref{eq:quadratic_mu} and \eqref{eq:magnitude_mu}. Observe first that $f_1(0) =|| \mathbf{z}||_2^2-\zeta$ and $f_2(0)=|| \mathbf{z}\|_2^2 -1$, meaning $f_2(0)\leq f_1(0)\leq0$ because $\zeta \leq1$. Next, for $\mu > 0$, $\frac{\mathrm{d}f_1(\mu)}{\mathrm{d}\mu}=2\| \mathbf{c}\|_2^2 \mu+2\mathbf{c}^T\mathbf{z}+1$ will always be greater than $\frac{\mathrm{d}f_2(\mu)}{\mathrm{d}\mu}=2\| \mathbf{c}\|_2^2 \mu+2\mathbf{c}^T\mathbf{z}$. Geometrically, this means for $\mu\in \mathbb{R}_+$, the quadratic $f_1(\mu)$ always lies above $f_2(\mu)$ with greater slope, so it will reach $f_1(\mu^+)=0$ for a lower (but still positive) value of $\mu$ than $f_2(\overline{\mu})$. Therefore, $\underline{\mu}\leq \mu^+ \leq \overline{\mu}$ will always hold, proving $\bd y$ from \eqref{eq:condition1} exists and set $\mathcal{C}$ is convex.
\end{proof}

\begin{figure}[ht]
\centering
    \includegraphics[width=0.8\linewidth]{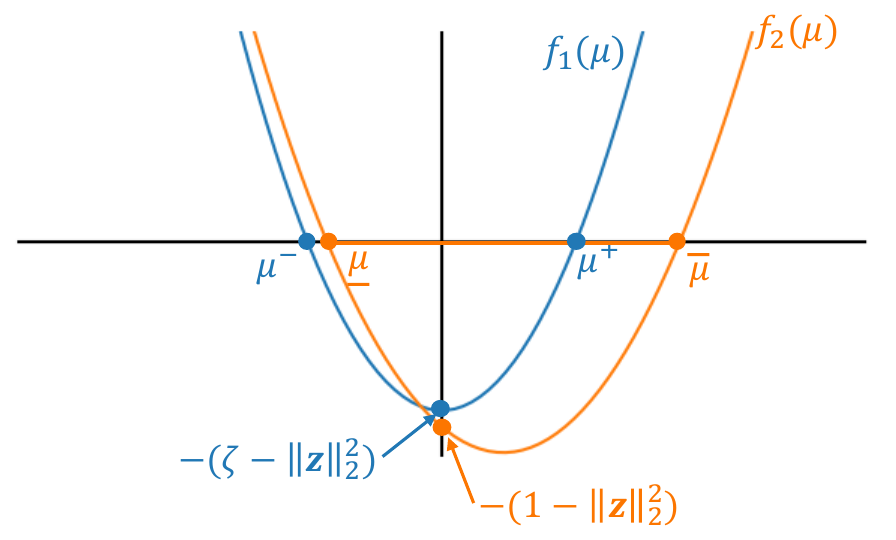}
    \caption{Graphical representation of the quadratic equation $f_1(\mu)$ with its roots and quadratic inequality $f_2(\mu)\leq0$ that corresponds to the interval $\mu \in [\underline{\mu},\overline{\mu}]$. Convexity of set $\mathcal{C}$ depends on at least one of $\mu^-, \mu ^+$ falling within this interval.}
    \label{fig:quadratics}
\end{figure}

Theorem~\ref{thm:main} shows that for any two combinations of $P,Q,V_{dq}^2$, the feasible region is convex. But it does not immediately provide a way to solve \eqref{eq:opt_S}, since it does not directly give an algebraic description of the set. However, as the next theorem shows, we can describe the set using linear matrix inequalities. 
\begin{lemma} \label{lem:W}
    Let $\bd W$ be a 3 by 3 matrix \edit{whose $(i,j)^{th}$ element is denote by $W_{ij}$} and define $\hat{C}$ as 
    \begin{equation*}
    \begin{split}
        \hat{C}=\{ (\Tr(\bd{M}_1 \bd{W}),\Tr(\bd{M}_2 \bd{W}) | W_{11}+W_{22}\leq 1, & W_{33}=1, \\
        & \bd{W} \succeq 0 \},
    \end{split}
    \end{equation*}
    \edit{where $\zeta_1,\zeta_2$ are arbitrary constants and $\bd{I}_2$ is the 2 by 2 identity matrix and}
    \begin{equation} \label{eq:M12} 
    \bd{M}_1=\begin{bmatrix} \bd{I}_2 & \frac12 \bd{a} \\ \frac12 \bd{a}^T & \zeta_1 \end{bmatrix} \mbox{ and } \bd{M}_2=\begin{bmatrix} \bd{I}_2 & \frac12 \bd{b} \\ \frac12 \bd{b}^T & \zeta_2 \end{bmatrix},
    \end{equation}
    Then with $\C$ defined as in Lemma~\ref{thm:convexity}, we have $\C=\hat{\C}.$
\end{lemma}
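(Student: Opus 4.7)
The plan is to prove $\mathcal{C}=\hat{\mathcal{C}}$ by establishing the two inclusions separately, with the forward direction realized by a rank-one ``lift'' of feasible $\bd x$ and the reverse direction recovering such an $\bd x$ from an arbitrary feasible $\bd W$.

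For $\mathcal{C}\subseteq\hat{\mathcal{C}}$, given any $\bd x$ with $\|\bd x\|_2^2\leq 1$ I would take the rank-one matrix
\[
\bd W = \begin{bmatrix}\bd x \\ 1\end{bmatrix}\begin{bmatrix}\bd x^T & 1\end{bmatrix} = \begin{bmatrix}\bd x\bd x^T & \bd x \\ \bd x^T & 1\end{bmatrix}.
\]
By construction $\bd W\succeq 0$, $W_{33}=1$, and $W_{11}+W_{22}=\|\bd x\|_2^2\leq 1$, so the LMI constraints hold. Expanding $\Tr(\bd M_i\bd W)$ using the block structure of $\bd M_i$ together with $\Tr(\bd I_2\,\bd x\bd x^T)=\|\bd x\|_2^2$ reproduces the two coordinates that define $\mathcal{C}$, modulo the additive constants $\zeta_i$. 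This direction is essentially bookkeeping.

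The non-trivial inclusion is $\hat{\mathcal{C}}\subseteq\mathcal{C}$. Given any feasible $\bd W$ I would block-decompose
\[
\bd W=\begin{bmatrix}\bd X & \bd y \\ \bd y^T & 1\end{bmatrix},\qquad \bd X\succeq 0,\quad \Tr(\bd X)\leq 1.
\]
The Schur-complement characterization of positive semidefiniteness yields $\bd X\succeq \bd y\bd y^T$, and hence $t:=\Tr(\bd X)-\|\bd y\|_2^2\geq 0$ with $\|\bd y\|_2^2+t\leq 1$. The trace targets then read $\Tr(\bd M_1\bd W)=\|\bd y\|_2^2+t+\bd a^T\bd y+\zeta_1$ and symmetrically for $\bd M_2$. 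To produce a matching $\bd x$, subtracting the two trace equations forces $(\bd a-\bd b)^T(\bd x-\bd y)=0$, which suggests the ansatz $\bd x=\bd y+\lambda\,\bd J(\bd a-\bd b)$ for a scalar $\lambda$; the difference vector is nonzero since $\bd a,\bd b$ are linearly independent. Substituting this back collapses both trace equations into the same quadratic in $\lambda$ whose constant term is $-t\leq 0$, so it admits real roots of opposite sign. A short sign analysis on the resulting identity $\|\bd x\|_2^2=\|\bd y\|_2^2+t-\lambda\,\bd a^T\bd J(\bd a-\bd b)$ then selects a root that keeps $\bd x$ inside the unit disk, completing the construction.

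The main obstacle is precisely this last step: verifying that the sign of $\bd a^T\bd J(\bd a-\bd b)$ is compatible with the sign of the chosen root in every configuration. If the case analysis becomes cumbersome, I would fall back on a cleaner convex-geometric argument. Both $\mathcal{C}$ (convex by Theorem~\ref{thm:main}) and $\hat{\mathcal{C}}$ (convex as the image of an LMI feasible set under a linear map) are closed convex subsets of $\mathbb{R}^2$, so equality can be verified by matching their support functions. In any direction $(c_1,c_2)$, the support function of $\mathcal{C}$ is a trust-region subproblem with a single ball constraint, while that of $\hat{\mathcal{C}}$ is exactly its Shor SDP relaxation; tightness of this relaxation---a classical consequence of the S-lemma---then yields $\mathcal{C}=\hat{\mathcal{C}}$ with no case work at all.
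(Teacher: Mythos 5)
Your proposal is correct, and its main route is genuinely different from the paper's. The paper proves the lemma in one stroke: it asserts that the LMI feasible set $\{\bd W \succeq 0,\ W_{33}=1,\ W_{11}+W_{22}\le 1\}$ is exactly the convex hull of the lifted rank-one matrices $\bigl[\bd x^T\ 1\bigr]^T\bigl[\bd x^T\ 1\bigr]$ with $\|\bd x\|_2^2\le 1$, so by linearity of the trace $\conv(\C)=\hat{\C}$, and then invokes Lemma~\ref{thm:convexity} to get $\C=\conv(\C)=\hat{\C}$. Your first argument instead proves $\hat{\C}\subseteq\C$ constructively, recovering a feasible $\bd x$ from an arbitrary feasible $\bd W$, and therefore needs neither the convex-hull characterization (which the paper leaves unproven) nor the convexity of $\C$ at all — indeed it re-derives that convexity as a byproduct, and it gives an explicit rank-reduction recipe that is exactly what the paper later alludes to when it says a quadratic equation can be solved to extract $\oIdq$ from a non-rank-one SDP solution. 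Moreover, the ``main obstacle'' you flag is not an obstacle: the two trace equations collapse to a single quadratic in $\lambda$ precisely because $(\bd a-\bd b)^T\bd J(\bd a-\bd b)=0$ by skew-symmetry, its constant term is $-t\le 0$ so its two real roots have opposite signs (with $\lambda=0$ a root when $t=0$), and $\bd a^T\bd J(\bd a-\bd b)=-\bd a^T\bd J\bd b\neq 0$ by linear independence; hence you can always pick the root with $\lambda\,\bd a^T\bd J(\bd a-\bd b)\ge 0$, and the identity $\|\bd x\|_2^2=\|\bd y\|_2^2+t-\lambda\,\bd a^T\bd J(\bd a-\bd b)\le \Tr(\bd X)\le 1$ closes the argument with no case analysis. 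Your fallback via support functions and tightness of the Shor relaxation for the single-ball (trust-region) problem is also valid and is essentially a repackaging of the paper's convex-hull argument, since it again leans on convexity of $\C$ (which, minor slip, comes from Lemma~\ref{thm:convexity}, not Theorem~\ref{thm:main}) plus a known exactness result; both you and the paper gloss over the additive constants $\zeta_1,\zeta_2$ and the fact that the stated $\bd M_1,\bd M_2$ carry $\bd I_2$ rather than $\alpha\bd I_2,\beta\bd I_2$, which are blemishes of the lemma statement rather than of either proof.
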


\begin{proof}

 We first show that $\conv(\C) = \hat{\C}$, where $\conv(\C)$ is the convex hull of $\C$. Observe we can write the equation in~\eqref{eq:quad_norm} in the form of $\bma \bd x \\ 1 \ebma^T \bd M \bma \bd x \\ 1 \ebma$ for some matrix $M$. Then note that the feasible set of $\bd{W}$ is the convex hull of the feasible set for $\bd{x}$:
    \begin{equation*}
    \begin{split}
    & \{W_{11}+W_{22}\leq 1, W_{33}=1, \bd{W} \succeq 0 \}\\
    = &\conv\left(\left\{\begin{bmatrix} \bd x \\ 1 \end{bmatrix}, ||x||_2^2 \leq 1\right\}\right).
    \end{split}
   \end{equation*}
    Then by the linearity of the trace operator,  $\conv(\C) = \hat{\C}$. By Lemma~\ref{thm:convexity}, $\C$ is convex, so we have $\C=\conv(\C)=\hat{\C}$. 
    
\end{proof}


Lemma~\ref{lem:W} shows that we can find optimal steady-state current values by solving
\begin{subequations} \label{eq:sdp}
    \begin{align}
        \min \; & f(\Tr(\bd M_1 \bd W,)\Tr(\bd M_2 \bd W)) \\
        \mbox{s.t. } & W_{11}+W_{22} \leq I_{\max}^2 \\
        & W_{33} =1 \\
        & \bd W \succeq 0,
    \end{align}
\end{subequations}
where $\bd M_1, \bd M_2$ can be chosen to be $\bma \frac32 R \bd{I}_2 & \frac34 \bd E_{\mathrm{dq}} \\ \frac34 \bd E_{\mathrm{dq}}^T & 0  \ebma$ for $P$, $ \bma \frac32 \omega L \bd{I}_2 & \frac34 \bd J^T \bd E_{\mathrm{dq}} \\ \frac34 \bd E_{\mathrm{dq}}^T \bd J & 0 \ebma$ for $Q$ and $\bma (R^2+\omega^2 L^2) \bd{I}_2 & \bd A^T  \bd E_{\mathrm{dq}} \\ \bd E_{\mathrm{dq}}^T \bd A & ||\bd E_{\mathrm{dq}}||_2^2 \ebma $ for $\bd V_{\mathrm{dq}}^2$.

So far we have shown that given a feasible $\bd W$ to \eqref{eq:sdp}, we can find a feasible current $\oIdq$ that achieves the same objective value. However,  the optimal solution to \eqref{eq:sdp} may not be rank 1. We can easily solve a quadratic equation (cf. proof of Lemma~\ref{thm:convexity}) to find the corresponding $\oIdq$, but an interesting observation from simulations is that the solution to \eqref{eq:sdp} is always rank 1 (possibly with a small regularization term on the nuclear norm of $\bd W$). An important part of our future work is to rigorously prove this observation.

\edit{We also remark that \eqref{eq:sdp} can be solved with data that are easy to obtain in practice. The grid voltage magnitude $||\bd E_\mathrm{dq}||_2$ is easy to measure and $R,L$ values can be estimated with existing inverter hardware if not already known. Since the feasible region in \eqref{eq:sdp} is invariant to the rotation of the angle in $\bd E_\mathrm{eq}$, an arbitrary phase angle (e.g., 0) can be chosen.}



\subsection{Inverter Control} \label{sec:control}
\edit{Given the steady-state values $\oIdq$ for \eqref{eq:rl_dynamics}, we need to find a controller such that the current will reach this equilibrium. If we have full information, including the phase angle of the grid voltage, we could use a 
linear feedback control on $\mathbf{V}_\mathrm{dq}$:}
\begin{equation} \label{eq:vd_feedback}
    \mathbf{\dot{V}_\mathrm{dq}} = -k_v (\mathbf{V}_\mathrm{dq} -\overline{\mathbf{V}}_\mathrm{dq}), \\
\end{equation}
where $k_v$ is a positive proportional gain constant. It is easy to show that the closed-loop system is always stable (e.g., by computing the eigenvalues), and the trajectories remain within $\mathcal{I}$ with a correctly chosen $k_v$~\cite{joswig2024safe}.

\edit{If full information is not available, the following section demonstrates how our approach can be used to provide an optimal setpoint to existing grid-forming controllers that do not need to know the system parameters~\cite{gu2022power,huang2017transient}. Thus, our approach advances the grid-forming control technology by guaranteeing safe and optimal steady-state setpoints.}


\section{Case Study} \label{sec:simulation}
\

edit{To demonstrate our approach, we solve \eqref{eq:sdp} for both feasible and infeasible reference setpoints, then use different controllers to achieve the optimal setpoints. We evaluate the performance of a standard grid-forming droop controller adapted from \cite{9254645} given an infeasible setpoint against the same controller provided with an updated, optimal setpoint from \eqref{eq:sdp}. We then implement the linear controller in \eqref{eq:vd_feedback} to show that both transient and steady-state currents remain safe in the idealized case when full system information is available. }

\edit{To solve \eqref{eq:sdp} for different $(P^*,Q^*),(P^*,V_\mathrm{dq}^{2*})$ and $(Q^*,V_\mathrm{dq}^{2*})$ setpoints, we select the appropriate matrices $\bd M_1$, $\bd M_2$  and use the \texttt{CLARABEL} solver in \texttt{CVXPY} \cite{diamond2016cvxpy}. From the optimal solution $\bd W^*$, we extract $\overline{\mathbf{I}}_\mathrm{dq}^*$ and solve for the corresponding power and/or voltage equilibrium values that the controller needs to track.}

\edit{All simulations} are conducted using the \texttt{SciPy} odeint solver with a sampling time of $\Delta t = 100\mu s$ and simulation period of $t_{\mathrm{end}}= 1 s$. Reference setpoints used in the simulations are provided in Table \ref{tab:setpoints} where the disturbance is modeled as \edit{an update to the setpoint values} at time $t_0$ from pre- to post-disturbance values.  

\begin{table}
    \centering
    \caption{Simulation Parameters}
    \begin{tabular}{cc|cc}
    \hline
         Parameter & Value &  Droop Param. & Value \\ 
         \hline
         $R$& 0.8 $\Omega$ & $m_{p}$ & $2.6 \times 10^{-3}$ rad/sW \\
         $L$ & 1.5 mH & $m_{q}$ & $5.0 \times 10^{-3}$ V/VAr\\
         $E$ & 120 V & $m_{v^2}$ & 5.0 \\
         $\omega_\mathrm{nom}$ & $2\pi60$ rad/s & $\omega_{c}$ & $2\pi60$ rad/s\\
         \cline{3-4}
         $I_\mathrm{dq,nom}$ & 3.33 A & & \\
         $V_\mathrm{dq,nom}$ & 120 V & OC Param. & Value \\
         \cline{3-4}
         $S_\mathrm{inv,nom}$ & 1200 W & $k_{v}$ & 10 \\
         \hline
    \end{tabular}
    \label{tab:parameters}
\end{table}

\begin{table}
    \centering
    \caption{$P^*,Q^*,V^{2*}$ setpoints used in simulations}
    \begin{tabular}{cc|ccc}
        && $P^*$(W) & $Q^*$(VAr) & $V_\mathrm{dq}^{2*}$(V\textsuperscript{2})\\
        \hline
        $PQ$ & Pre-Disturbance & $800$ &  $0$ & --\\ 
         Tracking & Post-Disturbance & $\bf{1100}$ &  $0$ & -- \\
         \hline
        $PV^2$ & Pre-Disturbance & $200$ &  -- & $120^2$ \\
         Tracking & Post-Disturbance & $\bf{850}$ &  -- & $120^2$  \\
        \hline
        $QV^2$ & Pre-Disturbance & -- &  $0$ & $120^2$  \\
        Tracking & Post-Disturbance & -- &  $\bf{-500}$ & $120^2$  \\
        \hline
    \end{tabular}
    \label{tab:setpoints}
\end{table}

\edit{\subsection{Grid-Forming Droop Control Comparison}}
The $PQ$ droop controller first filters the inverter's active and reactive power ($P$,$Q$) through a low-pass filter with cut-off frequency $\omega_c$ to measure $\tilde{P}$ and $\tilde{Q}$. These filtered values have dynamics $ \dot{\tilde{P}} =\omega_c(P-\tilde{P}), \; \dot{\tilde{Q}} =\omega_c(Q-\tilde{Q}).$

Then, voltage magnitude droop and frequency droop are implemented as $\dot{V}_\mathrm{dq} = m_q\omega_c(\tilde{Q}-Q)$ and $\Delta \omega_i = -m_p(\tilde{P}-P^*)$, where $\Delta \omega_i$ is the deviation of inverter frequency from grid frequency, and $m_p$ and $m_q$ are proportional gain parameters. These follow from $\Delta V_\mathrm{dq}=-m_q(\tilde{Q}-Q^*)$, where $\Delta V_\mathrm{dq}$ is the deviation in voltage magnitude from a desired $V_\mathrm{dq,nom}$ value.

The $PV^2$ droop controller has the same low-pass $PQ$ filter and power-frequency dynamics as the $PQ$ droop controller, but it replaces the reactive power-voltage dynamics  with linear feedback on $V^2_\mathrm{dq}$ as $\dot{V}^2_\mathrm{dq} = -m_{v^2}(V^2_\mathrm{dq} - V^{2*}_\mathrm{dq}).$ Parameters for the $PQ$ and $PV^2$ droop controllers are listed in Table \ref{tab:parameters} (adapted from \cite{9254645}).

\begin{figure}[ht]
\centering
\includegraphics[width=.7\linewidth]{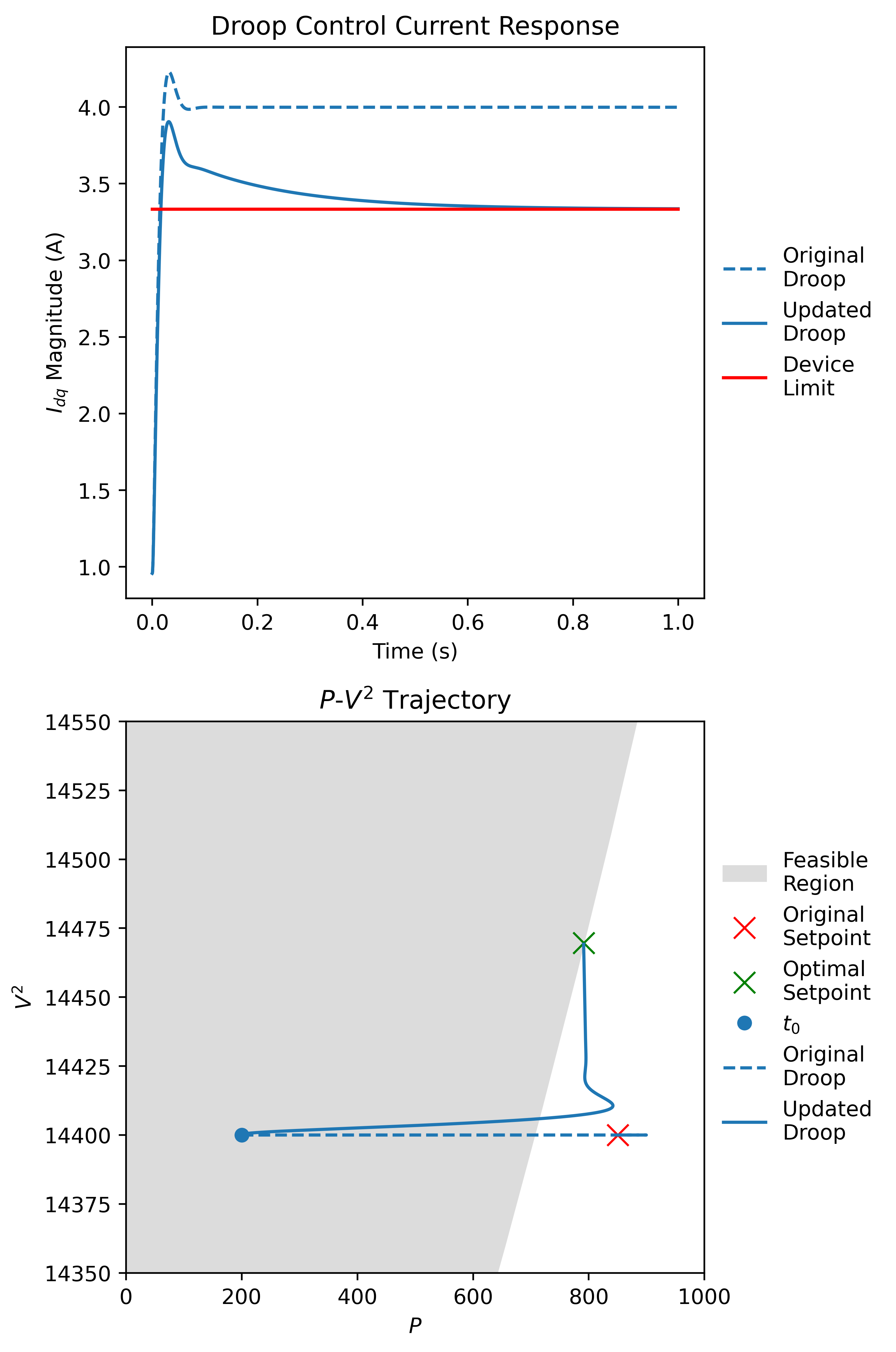}
\caption{\edit{The two plots compare the current magnitude response (top) and $(P,V^{2})$ region trajectory (bottom) for the same droop controller tracking two different setpoints: the original infeasible setpoint (dashed blue line) and the best feasible setpoint output from the optimization problem (solid blue line). Providing an updated, optimized setpoint to the droop controller instead of the original reference setpoint ensures it achieves a safe steady-state value regardless of whether the original setpoint was within the feasible operating region. The transient response violates the current constraints for a short time, which might be tolerable~\cite{sorensen2013thermal,firouz2014efficiency} or can be mitigated with other techniques~\cite{qoria2020current,fan2022review}.}}
\label{fig:droop_pv2}
\end{figure}

\edit{Figure \ref{fig:droop_pv2} shows the unsafe transient \emph{and} steady-state response of the original $PV^2$ droop controller to an infeasible setpoint of $P^*=850W,V_\mathrm{dq}^{2*}=120^2V$. In contrast, when we solve \eqref{eq:sdp} to provide an updated, safe $(P^*,V_\mathrm{dq}^{2*})$ setpoint for the droop controller, we allow the grid-forming inverter to settle to the optimal feasible operating point. This prevents a consistent and unsafe violation of current limits in steady-state.}

\edit{\subsection{Linear Feedback Optimal Controller (OC)}}
\begin{figure}[ht]
\centering
\begin{minipage}{0.7\linewidth}
    \includegraphics[width=\linewidth]{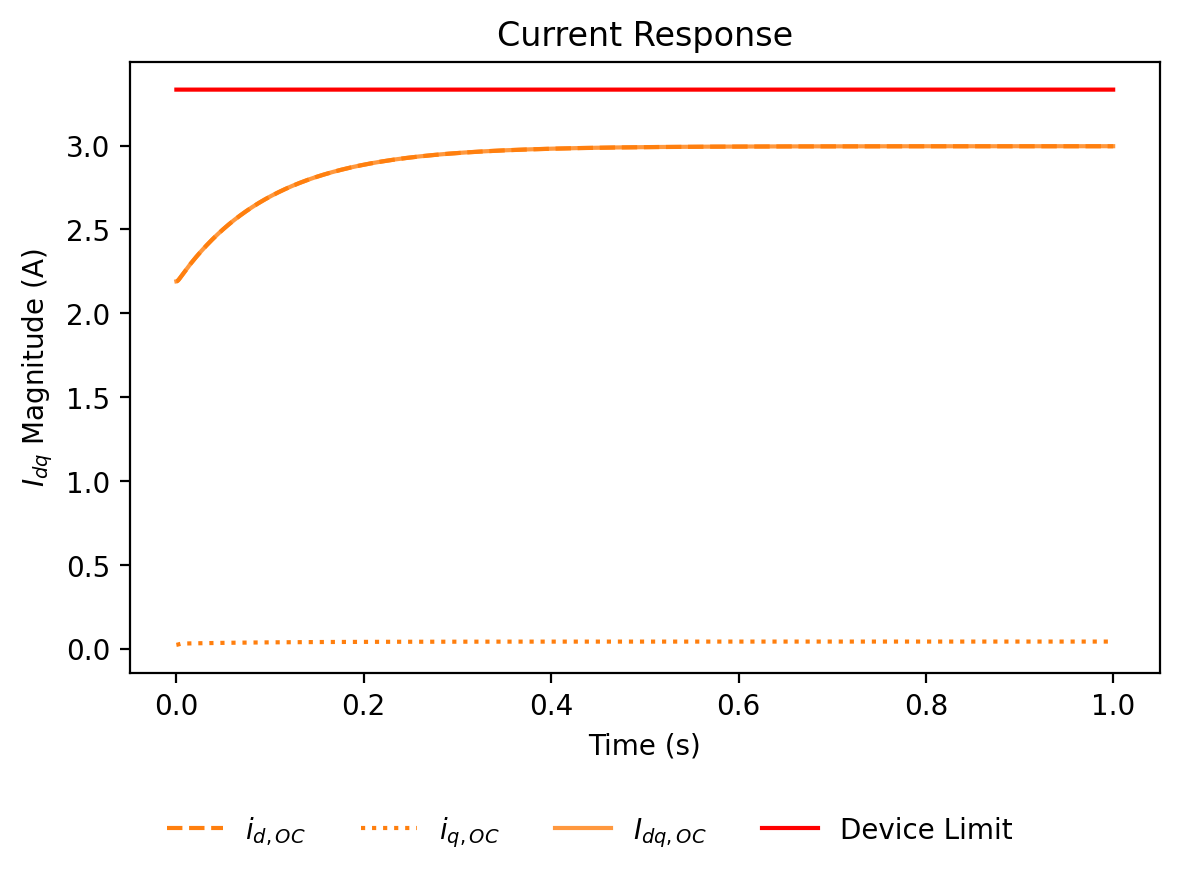}
\end{minipage} \\
\begin{minipage}{0.7\linewidth}
    \includegraphics[width=\linewidth]{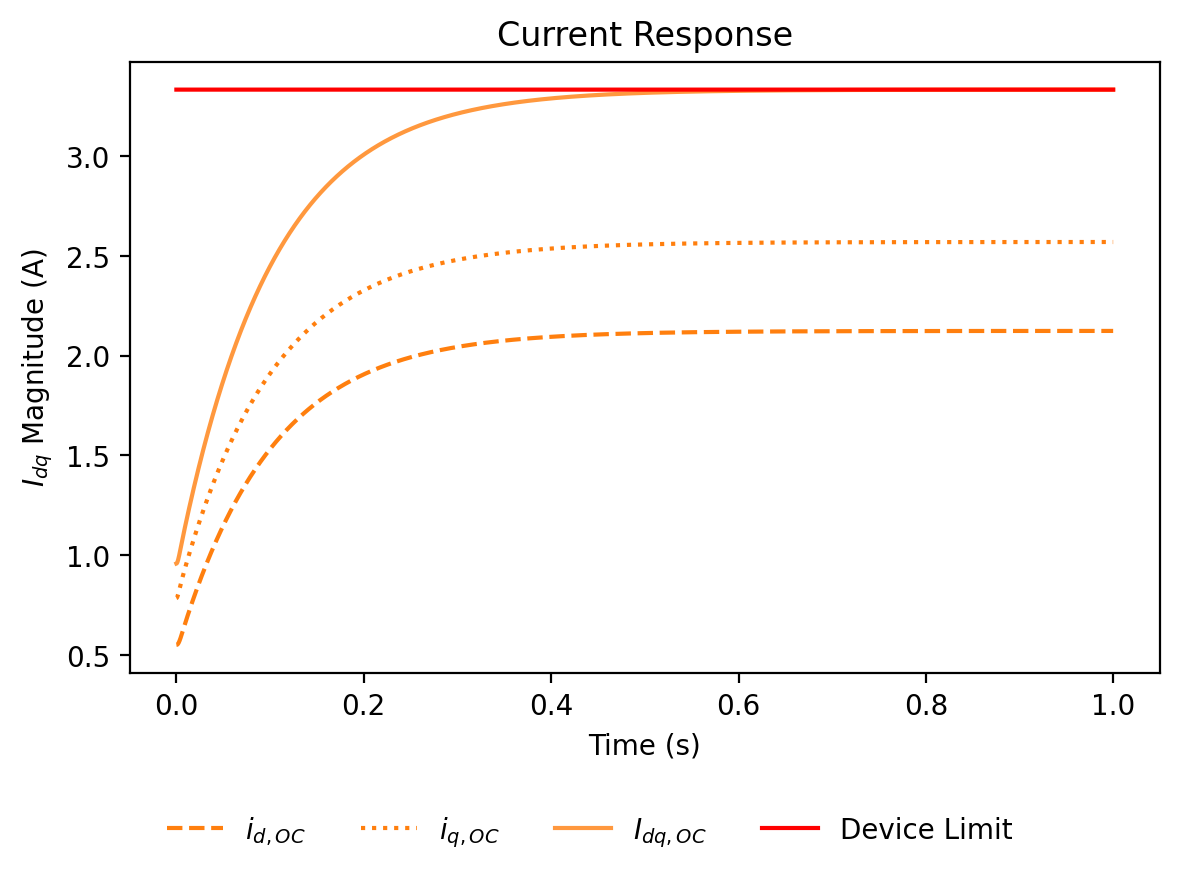}
\end{minipage}
\caption{\edit{The top plot is the response of the optimal controller current magnitude to a feasible $(P^*,Q^*)$ setpoint and the bottom plot is the response of the optimal controller to an infeasible $(P^*,V^{2*})$ setpoint. In both simulations, the inverter current magnitude remains within the safe operating region. The infeasible setpoint in the bottom plot causes the inverter current to settle at the boundary of the safe operating region.}}
\label{fig:oc_response}
\end{figure}
\edit{The linear feedback controller in \eqref{eq:vd_feedback} tracks $\oVdq^* = \mathbf{E}_\mathrm{dq} - L\mathbf{A}\oIdq^*$. We implement feedback on $\bd V_\mathrm{dq}$ with the value of $k_v$ listed in Table \ref{tab:parameters}.}
\edit{We simulate two simple examples of the optimal controller tracking $PQ$ and $PV^2$ setpoints. Figure \ref{fig:oc_response} shows the inverter current response following a feasible change in the $PQ$ tracker's setpoint and the current response following an infeasible change in the $PV^2$ tracker's setpoint (see Table \ref{tab:setpoints} for details). The linear feedback controller ensures both the transient and steady-state currents remain within the safe operating region of the inverter. }


\edit{These advantages of the optimal controller motivate further investigation into whether we can control the inverter to achieve safety at all times without knowing the grid voltage magnitude or the $RL$ filter values. We believe our methods can be adapted to the case where the parameters are not known, and we address this in a follow-up work.}

\section{Conclusion} \label{sec:conclusion}
In this paper, we studied the geometry of the feasible output region of a current-limited inverter. We showed that this region is convex and can be described using linear matrix inequalities. \edit{We demonstrated how to use this fact to improve grid-forming controllers such that steady-state currents remain within the current magnitude limit. Some future directions include showing the semidefinite program always returns a rank 1 solution, and how the controllers can be designed when the detailed parameters are not known.}



\bibliographystyle{IEEEtran}
\bibliography{Reference}
\end{document}